\documentclass[reqno,10pt]{amsart}
 \oddsidemargin9mm
 \evensidemargin9mm 
 \textwidth13cm

\usepackage{amsmath}
\usepackage{amsthm}
\usepackage{amssymb}
\usepackage[mathscr]{euscript} 
\usepackage{bm}
\usepackage{pgf}
\usepackage{color}
\usepackage{graphicx}
\usepackage{comment}
\usepackage{mathtools}
\usepackage{soul}

\parskip2mm

\newcommand\be[1]{\begin{equation}\label{#1}}
\newcommand\ee{\end{equation}}
\newcommand\ba[1]{\begin{align}\label{#1}}
\newcommand\ea{\end{align}}
\newcommand\bas{\begin{align*}}
\newcommand\eas{\end{align*}}
\newcommand\nn{\nonumber}

\newtheorem{theorem}{Theorem}[section]
\newtheorem{lemma}[theorem]{Lemma}

\theoremstyle{definition}
\begingroup
\newtheorem{definition}[theorem]{Definition}
\newtheorem{remark}[theorem]{Remark}

\endgroup

\newcommand{\Rz}{\mathbb{R}}
\newcommand{\Nz}{\mathbb{N}}
\newcommand{\Zz}{\mathbb{Z}}

\newcommand{\eps}{\varepsilon}

\newcommand{\eone}{\boldsymbol e_1}
\newcommand{\etwo}{\boldsymbol e_2}
\newcommand{\ethree}{\boldsymbol e_3}
\newcommand{\ei}{\boldsymbol e_i}

\newcommand{\EEE}{\color{black}}
\newcommand{\UUU}{\color{black}}
\newcommand{\PPP}{\color{black}}

\usepackage{metalogo}
\usepackage{booktabs}
\usepackage{hyperref}
\makeatletter
\hypersetup{%
     pdfpagemode={UseOutlines},
     bookmarksopen,
     pdfstartview={FitH},
     colorlinks,
     linkcolor={blue},
     citecolor={red},
    urlcolor={red}
  }

\usepackage{color}
\newcommand{\cb}{\color{blue}}

\newcommand{\cm}{\color{magenta}}

\newenvironment{proofad1}{\removelastskip\par\medskip   
\noindent{\em Proof of \rm Theorem \ref{main1}.}
\rm}{\penalty-20\null\hfill$\square$\par\medbreak} 

\newenvironment{proofad2}{\removelastskip\par\medskip   
\noindent{\em Proof of \rm Theorem \ref{main2}.}
\rm}{\penalty-20\null\hfill$\square$\par\medbreak} 

\begin{document}

\title[$N^{3/4}$ law in the cubic lattice]{$N^{3/4}$ law in the cubic lattice}

\author{Edoardo Mainini}
\address[Edoardo Mainini]{Dipartimento di Ingegneria meccanica, energetica, gestionale e dei trasporti, 
  Universit\`a  degli studi di Genova, Via all'Opera Pia, 15 - 16145 Genova Italy.}
\email{mainini@dime.unige.it}

\author{Paolo Piovano}
\address[Paolo Piovano]{Faculty of Mathematics, University of Vienna,   Oskar-Morgenstern-Platz 1, A-1090 Vienna, Austria}
\email{paolo.piovano@univie.ac.at}
\urladdr{http://www.mat.univie.ac.at/$\sim$piovano/}

\author{Bernd Schmidt}
\address[Bernd Schmidt]{Institute of Mathematics, University of Augsburg,   Universit\"atsstr. 14, 86159 Augsburg, Germany}
\email{bernd.schmidt@math.uni-augsburg.de}
\urladdr{http://www.math.uni-augsburg.de/ana/schmidt.html}

\author{Ulisse Stefanelli}
\address[Ulisse Stefanelli]{Faculty of Mathematics, University of
  Vienna,   Oskar-Morgenstern-Platz 1, A-1090 Vienna, Austria and
  Istituto di Matematica Applicata e Tecnologie Informatiche {\it
    E. Magenes} - CNR, via Ferrata 1, I-27100 Pavia, Italy}
\email{ulisse.stefanelli@univie.ac.at}
\urladdr{http://www.mat.univie.ac.at/$\sim$stefanelli}

\keywords{Wulff shape, $N^{3/4}$ law, cubic lattice, fluctuations}

\begin{abstract}
We investigate the Edge-Isoperimetric Problem (EIP) for sets with $n$
elements of the cubic lattice  by emphasizing its relation with the
emergence of the Wulff shape in the crystallization
problem. Minimizers $M_n$ of the edge perimeter are shown to deviate
from a corresponding cubic Wulff configuration with respect to their
symmetric difference by at most \UUU $ {\rm O}(n^{3/4})$ \EEE
elements. \UUU The \EEE exponent $3/4$ \UUU is optimal. This 
 extends to the cubic lattice analogous results that have already  been 
established   for the triangular, the hexagonal, and
the square lattice in two space dimensions. \EEE 
\end{abstract}

\subjclass[2010]{82D25.} 
 \keywords{Wulff shape, $N$\textsuperscript{3/4} law, cubic lattice, fluctuations, edge perimeter}

\maketitle

\pagestyle{myheadings}

\section{Introduction}
In this contribution we consider the {\it Edge-Isoperimetric Problem} (EIP) in 
$$\Zz^3:=\{k_1 \eone+k_2\etwo+ k_3 \ethree\,\,:\quad\textrm{$k_i\in\Zz$ for $i=1,2,3$}\}$$
where  $\eone:=\left(1,0,0\right)$, $\etwo:=(0,1,0)$ and  $\ethree:=(0,0,1)$.
For any set $C_n$ made of $n$  elements of $\Zz^3$, we denote by $\Theta(C_n)$ the {\it edge boundary} of $C_n$, i.e.,  \be{edgeboundary}
\Theta(C_n):=\{ (x,y)\in \Zz^3\times\Zz^3 \ : \ \textrm{$|x-y|=1$, $x\in C_n$ and $y\in \Zz^3\EEE\setminus C_n$} \}
\ee 
and we refer to its cardinality $\#\Theta(C_n)$  as the {\it edge perimeter} of $C_n$.
Given the family $\mathcal{C}_n$ of all sets $C_n \subset \Zz^3$ with $n$ elements, the Edge-Isoperimetric Problem over $\mathcal{C}_n$ consists in considering the minimum problem
\begin{equation}\label{eip3}
\theta_n:=\min_{C_n\in\mathcal{C}_n} \#\Theta(C_n),
\end{equation}
which we denote by EIP$_n$, and in characterizing the EIP$_n$ solutions. 
The EIP is a classical combinatorial problem and a review on the results in Combinatorics can be found in \cite{Bezrukov, Harper}. Beyond its relevance in pure combinatorics, the EIP (and corresponding problems for similar notions of perimeter) plays a decisive role in a number of applied problems, ranging from {\it machine learning} (see \cite{ST} and references therein) to the {\it Crystallization Problem} (CP). We refer the reader to \cite{DPS} for the relation between the  EIP in the triangular lattice and the CP with respect to a two-body interatomic energy characterized by the {\it sticky-disc} interaction potential (see \cite{Heitmann-Radin80,Radin81} for more details).

Our main objective is to prove that any minimizer of the EIP$_n$  --after a suitable translation--  
differs from a fixed {\it cubic configuration} 
\be{wulffside}
W_n:= [0,\ell_n]^3\cap\Zz^3 \quad\mbox{with}\quad  
\ell_n:=\lfloor\sqrt[3]{n}\rfloor.
\ee 
(with respect to the cardinality of their symmetric difference) by at most 
\begin{equation}\label{estimation}
K\, n^{3/4}  + {\rm o}(n^{3/4})
\end{equation}
elements of $\Zz^3$ for some {\rm universal} positive constant $K>0$ (see Theorem \ref{main1}), and to show that this estimate is sharp for infinitely many $n$. In particular, the exponent $3/4$ of the leading term cannot be lowered in general (see Theorem \ref{main2}).  In the following we refer to the cubic configuration $W_n$ as the {\it Wulff shape} because of the analogy to the crystallization problem. 

We first show that \eqref{estimation} is an {\it upper bound} for every  minimizer of EIP$_n$.

\begin{theorem}[Upper Bound]\label{main1} 
There exists  a constant  $K_1>0$  independent of $n$ \EEE
such that  
\be{t-f-law}
 \min_{a \in \Zz^3} \#(M_n\triangle (a + W_n) ) \leq \PPP K_1 \EEE n^{3/4} + {\rm o}(n^{3/4})
\ee
for every $n\in\Nz$ and every minimizer $M_n$ of the EIP$_n$.
 \end{theorem}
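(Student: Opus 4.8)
The plan is to show that minimality forces $M_n$ to be, up to a translation and up to ${\rm O}(n^{2/3})$ exceptional points, a solid rectangular box whose three side lengths deviate from $\sqrt[3]{n}$ by at most ${\rm O}(n^{1/12})$; since replacing such a box by $W_n$ costs at most three slabs of thickness ${\rm O}(n^{1/12})$ across faces of area ${\rm O}(n^{2/3})$, this yields the symmetric difference bound ${\rm O}(n^{3/4})$. I first record the two elementary estimates that fix the leading order. Writing $A_1,A_2,A_3$ for the cardinalities of the projections of $M_n$ onto the three coordinate planes, each axis-parallel line meeting $M_n$ carries at least two boundary edges, so $\#\Theta(M_n)\ge 2(A_1+A_2+A_3)$; combined with the discrete Loomis--Whitney inequality $n^2\le A_1A_2A_3$ and the arithmetic--geometric mean inequality this gives $\#\Theta(M_n)\ge 6n^{2/3}$, while an explicit cube-plus-layers competitor shows $\theta_n\le 6n^{2/3}+{\rm O}(n^{1/3})$. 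This already confines $M_n$ to a nearly cubic box, but only up to ${\rm O}(n^{1/2})$ fluctuations of the $A_i$, hence only to within ${\rm O}(n^{5/6})$; the argument must therefore be refined.

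The refinement I would carry out slices $M_n$ into its horizontal layers $R_1,\dots,R_h$ and uses the identity $\#\Theta(M_n)=\sum_\ell P_2(R_\ell)+\sum_\ell|R_\ell\triangle R_{\ell+1}|$, where $P_2$ denotes the planar edge perimeter. Feeding in the sharp two-dimensional inequality $P_2(R)\ge 2\lceil 2\sqrt{|R|}\,\rceil$ and minimising over layer profiles pins down the minimisers: up to translation they are solid boxes carrying at most one incomplete outer layer per face, the incomplete layers themselves being planar isoperimetric optimisers. The quantity that then remains to be controlled is the spread $\max_i d_i-\min_i d_i$ of the side lengths $d_1,d_2,d_3$ of the maximal inscribed box.

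The crux, and the main obstacle, is to prove that this spread is ${\rm O}(n^{1/12})$ rather than the ${\rm O}(n^{1/6})$ that naive deficit bounds permit. Here I would argue by competition: if a minimiser had an elongated box, comparable to $(k{+}s)\times(k{+}s)\times(k{-}2s)$ with $k\sim\sqrt[3]{n}$, then redistributing the same volume into a nearly cubic box of side $k$ carrying one almost-complete face layer shaped as a planar optimiser changes the perimeter by exactly $2\big(\lceil 2\sqrt{M}\,\rceil-(2k-3s^2)\big)$, with $M=k^2-3ks^2+{\rm o}(k^2)$. Expanding $2\sqrt{M}=2k-3s^2-\tfrac{9s^4}{4k}+\dots$, the integer ceiling drops by one precisely when $s^4/k\gtrsim 1$, that is at $s\sim k^{1/4}\sim n^{1/12}$; beyond this threshold the cubic competitor strictly wins, contradicting minimality. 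The delicacy is that the perimeter excess $\theta_n-6n^{2/3}$ may itself be as large as ${\rm O}(n^{1/3})$, concentrated in the incomplete layers; the argument must therefore separate this rigid layer cost, which licenses no change of box shape, from the genuinely quadratic box-shape cost, and it is only the latter, through the ceiling above, that governs the fluctuations.

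Finally I would assemble the estimate. Translating so that the inscribed box is centred on $W_n$, the box differs from $W_n$ by at most three slabs of thickness ${\rm O}(n^{1/12})$ and face-area ${\rm O}(n^{2/3})$, contributing ${\rm O}(n^{3/4})$; the incomplete layers, the discrepancy $\big||W_n|-n\big|={\rm O}(n^{2/3})$, and all remaining lower-order defects contribute only ${\rm O}(n^{2/3})={\rm o}(n^{3/4})$. Collecting these bounds over the optimal translation $a\in\Zz^3$ yields \eqref{t-f-law}.
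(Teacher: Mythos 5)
Your third paragraph contains the correct core mechanism, and it is essentially the same one the paper uses: comparing an elongated box against a near-cube carrying a planar-optimal partial face, the perimeter change is $2\bigl(\lceil 2\sqrt{M}\,\rceil-(2k-3s^2)\bigr)$, and the quartic term $s^4/k$ hidden inside the ceiling is what makes deviations beyond $s\sim k^{1/4}\sim n^{1/12}$ strictly lose (the paper's Step 3 derives exactly this relation, $4\alpha\ell=9k^4+\dots$, i.e.\ $k\sim\alpha^{1/4}\ell^{1/4}$, by equating $2\lceil 2\sqrt{m}\,\rceil$ with the perimeter of a rectangular top face). The genuine gap is in how you get to apply this mechanism to an \emph{arbitrary} minimizer. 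Your reduction rests on the claim that the layer identity plus the sharp planar bound ``pins down the minimisers: up to translation they are solid boxes carrying at most one incomplete outer layer per face.'' That characterization is false. Three-dimensional EIP minimizers are highly degenerate: the paper's own lower-bound construction in Section \ref{sec:ProofThm2} produces minimizers $M'_{n_s}$ and $M''_{n_s}$ from a cube plus partial face by excavating a notch of thickness $\sim n^{1/12}$ and relocating whole lines and slabs, all bond-preservingly; these are minimizers which are not boxes with single incomplete outer layers, and such degenerate minimizers \emph{must} exist, since Theorem \ref{main2} exhibits minimizers at distance $\sim n^{3/4}$ from the Wulff shape. Equality (to within the allowed slack) in your layer-wise lower bound simply does not force the rigidity you assert; that degeneracy is the whole difficulty of the theorem.

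The paper resolves this not with a structure theorem but with a rearrangement, the cuboidification of Definition \ref{cuboidification}: an arbitrary minimizer is transformed, level by level and without losing bonds, into a quasicube, and the competition/ceiling argument is run on the transformed configuration. Crucially, the rearrangement is designed so that its parameters remember the original minimizer: the number of levels equals the longest side $\ell_3+1$ of the original bounding box $Q(M_n)$, and the daisy side $\ell$ satisfies $\ell+1\le\sqrt{(\ell_1+1)(\ell_2+1)}$ by \eqref{ellbound}, so the estimate $\ell_3-\ell={\rm O}(\ell^{1/4})$ proved for the rearranged configuration transfers back to $\max_i|\ell_i-\ell_n|={\rm O}(n^{1/12})$ for the original bounding box (Step 5), whence $\#(M_n\triangle(a+W_n))\le \#(Q(M_n)\setminus M_n)+\#(Q(M_n)\triangle(a+W_n))={\rm O}(n^{3/4})$. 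Your proposal has no substitute for this transfer step: even if you weaken your claim to ``every minimizer can be rearranged into a box plus layers,'' your final assembly then bounds the distance of the \emph{rearranged} configuration to $W_n$, not of $M_n$ itself, and such rearrangements can move $\Theta(n^{3/4})$ points. (A minor further point: your claim of only ${\rm O}(n^{2/3})$ exceptional points is stronger than what is needed or established; since the sides of $Q(M_n)$ may deviate by $n^{1/12}$, the defect count $\#(Q(M_n)\setminus M_n)$ is in general only ${\rm O}(n^{3/4})$, which still suffices for \eqref{t-f-law}.)
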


 Our second result shows that the exponent $3/4$ in \eqref{estimation} is optimal.

 \begin{theorem}[Lower Bound]\label{main2} 
There exists a sequence of minimizers $M_{n_i}$ with a diverging number $n_i\in\Nz$ of particles \PPP such that
\be{t-f-law2}
 \min_{a \in \Zz^3} \#(M_{n_i}\triangle (a + W_{n_i}) ) \ge K_2  {n_i}^{3/4} + {\rm o}(n_i^{3/4})
\ee
for some constant $K_2>0$ \emph{(}not depending on $n_i$\emph{)}. \EEE
 \end{theorem}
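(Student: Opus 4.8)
The plan is to exhibit, for a sparse sequence of cardinalities, an explicit minimizer that is an \emph{anisotropic box} whose aspect ratio deviates from that of a cube precisely at the scale $n^{1/12}$. Fix $\mu\in\Nz$, set $n_\mu:=\mu^{12}-\mu^6$, and consider the box
\[
B_\mu:=\{0,\dots,\mu^4-1\}\times\{0,\dots,\mu^4+\mu-1\}\times\{0,\dots,\mu^4-\mu-1\}\subset\Zz^3,
\]
which has edge lengths $\mu^4$, $\mu^4+\mu$, $\mu^4-\mu$ and hence exactly $\#B_\mu=\mu^4(\mu^4+\mu)(\mu^4-\mu)=\mu^{12}-\mu^6=n_\mu$ elements. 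The heuristic behind this choice is that stretching the cube by $\pm\mu$ in two directions produces a perimeter excess that is quadratic in the stretch, of order $\mu^2=n_\mu^{1/6}$, while the induced symmetric difference from a cube is of order (stretch)$\times$(face area)$=\mu\cdot\mu^8=\mu^9\sim n_\mu^{3/4}$. The value $n_\mu=\mu^{12}-\mu^6$ is tuned so that $\mu\sim n_\mu^{1/12}$ is exactly the largest stretch for which the box stays optimal.

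Since each coordinate line meets a box in a single segment, $\#\Theta(B_\mu)$ equals twice the sum of the areas of the three coordinate projections, namely $\#\Theta(B_\mu)=2\big((\mu^4)(\mu^4+\mu)+(\mu^4+\mu)(\mu^4-\mu)+(\mu^4-\mu)(\mu^4)\big)=2(3\mu^8-\mu^2)=6\mu^8-2\mu^2$. It then remains to prove that $B_\mu$ is a minimizer, i.e. $\theta_{n_\mu}=6\mu^8-2\mu^2$. For a lower bound one uses that every $C\in\mathcal C_n$ satisfies $\#\Theta(C)=2(\sigma_1(C)+\sigma_2(C)+\sigma_3(C))$, where $\sigma_j(C)$ is the total number of maximal segments of $C$ in direction $\ei$, and $\sigma_j(C)\ge S_j(C)$, the area of the projection of $C$ orthogonal to $\ei$; the Loomis--Whitney inequality $S_1S_2S_3\ge(\#C)^2$ together with AM--GM then gives $\#\Theta(C)\ge 6n^{2/3}$. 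For $n=n_\mu$ this only yields $\theta_{n_\mu}\ge 6\mu^8-4\mu^2$, which falls short of the target $6\mu^8-2\mu^2$ by $2\mu^2=2n_\mu^{1/6}$. \emph{Closing this $n^{1/6}$ gap is the main obstacle:} one must show that no configuration of $n_\mu$ points can have $S_1+S_2+S_3<3\mu^8-\mu^2$. The decisive point is a sharp, lattice-refined version of the projection inequality---morally, that lowering a projection area by $k$ forces the removal of at least $\mu^4k$ points, so that the $\mu^6$-point deficit of $B_\mu$ relative to the full cube $\{0,\dots,\mu^4-1\}^3$ cannot buy a projection reduction exceeding $\mu^2$. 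This is precisely the rigidity quantified by the characterization of minimizers underlying Theorem \ref{main1}, which I would invoke (or re-derive for these very structured values of $n$) to conclude $\theta_{n_\mu}=6\mu^8-2\mu^2$, whence $B_\mu$ is a minimizer.

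Finally I would estimate the symmetric difference. Because $n_\mu^{1/3}=\mu^4(1-\mu^{-6})^{1/3}\in(\mu^4-1,\mu^4)$, one has $\ell_{n_\mu}=\mu^4-1$ and $W_{n_\mu}=\{0,\dots,\mu^4-1\}^3$, a cube with $\mu^{12}$ elements. For any $a\in\Zz^3$ the overlap $\#\big(B_\mu\cap(a+W_{n_\mu})\big)$ is maximized by axis alignment, the intersections of the extents in the three directions being at most $\mu^4$, $\mu^4$ and $\mu^4-\mu$; thus $\#\big(B_\mu\cap(a+W_{n_\mu})\big)\le\mu^4\cdot\mu^4\cdot(\mu^4-\mu)=\mu^{12}-\mu^9$, with equality for the aligning translate. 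Consequently
\[
\min_{a\in\Zz^3}\#\big(B_\mu\triangle(a+W_{n_\mu})\big)=\#B_\mu+\#W_{n_\mu}-2\max_{a}\#\big(B_\mu\cap(a+W_{n_\mu})\big)=2\mu^9-\mu^6,
\]
and since $n_\mu^{3/4}=\mu^9(1-\mu^{-6})^{3/4}=\mu^9+{\rm o}(\mu^9)$ this equals $2n_\mu^{3/4}+{\rm o}(n_\mu^{3/4})$. Taking $n_i:=n_{\mu_i}$ and $M_{n_i}:=B_{\mu_i}$ along any $\mu_i\to\infty$ then establishes \eqref{t-f-law2} with $K_2=2$. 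The only genuinely delicate ingredient is the exact optimality of $B_\mu$ discussed above; the perimeter computation and the symmetric-difference estimate are elementary.
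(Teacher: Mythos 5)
Your box construction and the closing symmetric-difference computation are fine as far as they go (and would even yield a larger constant, $K_2=2$, than the paper's $1/3$), but the proof has a genuine gap exactly where you flag it: nothing in your argument establishes that $B_\mu$ is an EIP$_{n_\mu}$ minimizer, i.e.\ that $\theta_{n_\mu}=6\mu^8-2\mu^2$. The Loomis--Whitney/AM--GM bound only gives $\theta_{n_\mu}\ge 6n_\mu^{2/3}=6\mu^8-4\mu^2-{\rm O}(\mu^{-4})$, and the missing $2\mu^2=2n_\mu^{1/6}$ is not a harmless error term: by your own heuristic, perimeter differences of order $n^{1/6}$ are precisely what govern side-length fluctuations of order $n^{1/12}$, hence symmetric differences of order $n^{3/4}$ --- the very quantity at stake. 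If some competitor had perimeter $6\mu^8-3\mu^2$, say, then $B_\mu$ would simply not be a minimizer and the conclusion would collapse. Your proposed fix cannot work as stated: Theorem \ref{main1} is a statement about the geometry of minimizers and can only ever rule out that far-away configurations are minimizers; it can never certify that a given near-cube configuration \emph{is} one, and at no point does the paper compute, or sharply lower-bound, the value $\theta_n$ in three dimensions. The ``lattice-refined projection inequality'' you call morally true (that every $n_\mu$-point set satisfies $\sigma_1+\sigma_2+\sigma_3\ge 3\mu^8-\mu^2$) is the three-dimensional analogue of the Harary--Harborth bond formula; it is plausibly true, but it is the hard part, and it is nowhere available to be invoked. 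Note also that you cannot borrow the paper's own minimality assertion (cube plus two-dimensionally minimal top face), since that covers $n$ of the form $s^3+d$ with $0\le d<s^2$, whereas $n_\mu=\mu^{12}-\mu^6$ is not of this form: with $s=\mu^4-1$ one has $n_\mu-s^3\approx 3\mu^8\gg s^2$.

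This is exactly the difficulty the paper's proof is engineered to avoid. The paper never needs the value of $\theta_n$: it starts from $M_{n_s}=Q(s,s,s)\cup F_{d_s}$, whose minimality is reduced to the exactly solved two-dimensional problem (Lemma \ref{auxiliarylemma}, applied to the top face $F_{d_s}$), and then performs explicit bond-preserving rearrangements --- relocating lines of atoms from a corner region onto the translated top face, then shifting an entire slab sideways --- to obtain another minimizer $M''_{n_s}$ that protrudes from every translate of the cube $W_{n_s}$ by $\sim\frac13 n_s^{1/12}$ over a cross-section of area $\sim n_s^{2/3}$. Minimality is thus inherited move by move from a certified starting configuration, and no sharp three-dimensional isoperimetric inequality is ever required. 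To salvage your route you would need to do the analogous thing: either exhibit a chain of bond-preserving moves from a configuration of known minimality to $B_\mu$, or prove minimality of the cuboid-plus-2D-minimal-face configuration with $n_\mu$ points (which has the same perimeter $6\mu^8-2\mu^2$ as $B_\mu$). As written, the optimality of $B_\mu$ is assumed, not proven.
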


 We will prove Theorem \ref{main1} in Section \ref{sec:ProofThm1} and Theorem \ref{main2} in Section \ref{sec:ProofThm2}. By setting $K := \limsup_{n\to \infty} n^{-3/4} \max_{M_n} \min_{a \in \Zz^3} \#(M_n\triangle (a + W_n) )$, 
 where the maximum is taken among all configurations $M_n$ that are EIP$_n$ minimizers, by Theorems \ref{main1} and \ref{main2} we have $K\in(0,+\infty)$. We see that it is possible to choose $K_1 = K_2 = K$ in \eqref{t-f-law} and \eqref{t-f-law2}.  \EEE




 
These results for $\Zz^3$ are the first ones related to
fluctuations of minimizers in  three dimensions. Analogous results in
two dimensions have been established in \cite{DPS, Schmidt} for the
triangular lattice (see also \cite{Yeung-et-al12}), in \cite{MPS,MPS2}
for the square lattice, and finally in \cite{DPS2} for the hexagonal
lattice. The methods have been based on rearrangements techniques
\cite{Schmidt} and on the isoperimetric characterization of the
minimizers (with respect to suitable notions of perimeter $P$ and area
$A$ of configurations) which also allows to find the optimal constants
for  relations of the type of \eqref{t-f-law}  (see \cite{DPS2,DPS}).

\section{Mathematical Setting}\label{sectionsetting}

In this section we introduce the main definitions and notations used throughout the paper.

We first recall a useful characterization of EIP$_n$ minimizers that we shall often exploit.  The number of (unit) {\it bonds} of a configuration $C_n\in\mathcal{C}_n$ is 
\[b(C_n):=\frac12 \,\#\{(x,y)\in C_n \times C_n: |x-y|=1\}.
\]
Then the elementary relation $\#\Theta(C_n)+2b(C_n)=6n$ shows that $C_n$ is a minimizer for the EIP$_n$ if and only if it maximizes the number of unit bonds.

We  also introduce the $2$-dimensional  analogon of \eqref{eip3} which we denote here as EIP$^2_d$ for  $d\in\Nz$, i.e.,  
\begin{equation}\label{eip2}
\eta_d:=\min_{E_d\in\mathcal{C}^{2}_d} \#\Theta_{2}(E_d), 
\end{equation}
where $\mathcal{C}^{2}_d$ is the family of  subsets of the square lattice $\Zz^2$ with $d$ elements  
and 
\begin{equation}\label{edge2}
\Theta_2(E_d):=\{ (x,y)\in\Zz^2\times\Zz^2 \ : \ \textrm{$|x-y|=1$, $x\in E_d$ and $y\in \mathbb{Z}^2\setminus E_d$} \}.
\end{equation}
We recall from \cite{MPS} that $E_d$ solves \eqref{eip2} if and only if the number of unit bonds of $E_d$ (i.e., $ \tfrac12 \,\#\{(x,y)\in E_d\times E_d: |x-y|=1\}$) is equal to $\lfloor 2d-2\sqrt{d}\rfloor$. 
 This is equivalent to $\#\Theta_{2}(E_d) = 4d - 2\lfloor 2d-2\sqrt{d}\rfloor$, i.e., to 
\be{Theta-2-char} 
\#\Theta_{2}(E_d) = 2\lceil 2\sqrt{d} \rceil. 
\ee

We also recall from \cite{MPS} that for any $d\in \Nz$ there exists a minimizer $D_d$ of EIP$^2_d$ of the type
\be{daisy}
D_d:=R(s,s')\cup L_e
\ee
 for some $s,s'\in\Nz$ and $e\in\Nz\cup\{0\}$ such that  $s'\in\{s,s+1\}$, $s\cdot s'+e=d$, and $e< s'$, where
\be{rectangle} 
R(s,s'):=\Zz^2\cap\left([1,s]\times[1,s']\right)
\ee
 and  
\be{line}
\PPP L_e :=  \begin{cases}\UUU\Zz^2\PPP\cap\left( (0,e]\times\{s'+1\} \right)\quad&\textrm{if $s'=s$},\\
 \UUU\Zz^2\PPP\cap\left(\{s+1\}\times(0,e] \right)\quad&\textrm{if $s'=s+1$}. \EEE
 \end{cases}
\ee
\PPP Notice that if $e=0$, then $L_e=\emptyset$. \EEE
We refer to these $2$-dimensional minimizers as {\it daisies} (as already done in \cite{MPS}) and to the integers $s-1$ (resp. $s'-1$) as the minimal (resp. maximal) {\it  side length} of the rectangle \eqref{rectangle} of the daisy. 

Let us also introduce the notion of  {\it minimal rectangle} associated to a $2$-dimensional configuration $C_n$ in $\UUU\Zz^3\EEE$.

 
\EEE

\begin{definition}\label{minrectangle}
Given a configuration 
$$
C_n\subset\{k_1 \eone+k_2\etwo+ z \ethree\,\,:\quad\textrm{$k_i\in\Zz$ for $i=1,2$}\}
$$
 for some $z\in\Zz$, we denote by $R(C_n)$ the closure of the {\it
   minimal rectangle} containing $C_n$, i.e., the \UUU minimal \EEE
 rectangle \UUU $R$ \EEE with respect  to set inclusion in 
 $$
 \Rz^2_z:=\{(z_1,z_2,z_3)\in\Rz^3\,:\, z_3=z\}
 $$
\UUU with \EEE sides parallel to $\ei$ for $i=1,2$, and that satisfies $ C_n\subset R$. 
\end{definition}

Moving ahead to $3$-dimensional configurations in $\UUU\Zz^3\EEE$ we
introduce here a discrete rearrangement procedure, which we call {\it
  cuboidification}. Notice that the cuboidification is the 
3-dimensional  analogue  of the 2-dimensional  rearrangement introduced in \cite{MPS} and denoted {\it rectangularization} (see also \cite{ Bollobas, Harary, Harper}), even though here we define such rearrangement only for EIP$_n$ minimizers and not for a general configuration.  To this end, let us introduce for every $z\in\Zz$ 
 the notion of $z$-levels of a configuration $C_n\subset\UUU\Zz^3\EEE$ in the direction $i=1,2,3$, i.e., the $2$-dimensional configurations defined by
\begin{align*}
C_n(z, \cdot,\cdot)
&:=C_n\cap\{z \eone+k_2\etwo+ k_3 \ethree\,\,:\quad\textrm{$k_i\in\Zz$ for $i=2,3$}\}, \nonumber\\
C_n(\cdot, z,\cdot)
&:=C_n\cap\{k_1 \eone+z\etwo+ k_3 \ethree\,\,:\quad\textrm{$k_i\in\Zz$ for $i=1,3$}\}, \quad \textrm{and} \\ 
C_n(\cdot,\cdot, z)
&:=C_n\cap\{k_1 \eone+k_2\etwo+ z \ethree\,\,:\quad\textrm{$k_i\in\Zz$ for $i=1,2$}\}, 
\end{align*}
respectively. \PPP In the following, we  also denote by $Q(a_1,a_2,a_3)$ for some $a_i\in\Nz$ the closed  cuboid 
$$Q(a_1,a_2,a_3):=\UUU [1,a_1] \times [1,a_2]\times [1,a_3]. \EEE $$
\EEE Furthermore, we define the {\it 3-vacancies} of a configuration
$C_n\subset\UUU\Zz^3\EEE$ as the elements of $\UUU\Zz^3\EEE\setminus
C_n$ that would activate three bonds if added to $C_n$, i.e., those
elements of $\UUU\Zz^3\EEE\setminus C_n$ which have a distance $1$ to
\UUU exactly \EEE  three different elements of $C_n$. 

\begin{definition} \label{cuboidification} 
We define the \emph{cuboidification} $\mathcal{Q}(M_n)$ (in the direction $\ethree$) of a minimizer $M_n$ of the EIP$_n$  as the configuration  resulting from rearranging the particles of $M_n$ according to the following three steps.

\begin{itemize}
\item[(i)]  For every $z\in\Zz$, let $d_z:=\#M_n(\cdot,\cdot,
  z)$ and 
  \EEE consider the $2$-dimensional daisy $D_{d_z}$ (which has been defined in \eqref{daisy}), order the elements of the family $( D_{d_z} )_{d_z\neq0}$ decreasingly with respect to their cardinality, say $\displaystyle (D^{(k)})_{k=1,\dots,f}$ with $f:= \#\{ z \in \Zz \,:\,d_z\neq0 \}$, and consider the configuration $M'_n$ characterized by
$$
M'_n(\cdot,\cdot,k)=D^{(k)}+k\ethree
$$
  for $k=\{1,\ldots, f\}$ and  $M_n'(\cdot,\cdot,k)=\emptyset$ if
  $k\notin \{1,\ldots,f\}$, see Figure \ref{fig:M1}. 

By \eqref{daisy} there exist $s,s'\in\Nz$ and $e\in\Nz \cup 
\{0\}$ with $s\cdot s'+e=d$, $s'\in\{s,s+1\}$, 
and $e< s'$, such that $D^{(1)}= R(s,s') \cup L_e$ for $R(s,s')$ and $L_e$ defined as in \eqref{rectangle} and \eqref{line}, respectively.

It is clear that $M_n'$ is still an EIP$_n$ minimizer. Also, if $f\le 2$ the cuboidification algorithm ends here. Otherwise, we proceed to the next steps.

\begin{figure}[ht]
  \centering
  \includegraphics[width=65mm]{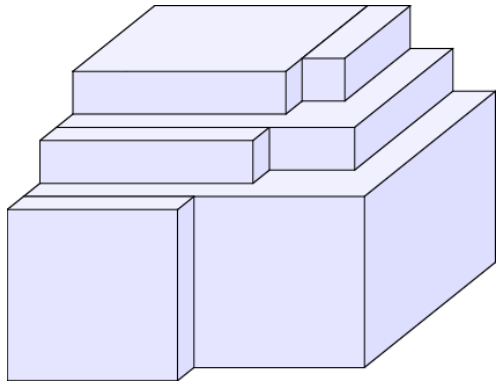}
  \caption{Configuration $M'_n$. A caveat: in favor of illustrative clarity,
    proportions in this and the following 
    figures do not correspond to the actual ones of a ground state.}
  \label{fig:M1}
\end{figure}

\item[(ii)] Consecutively  move the elements from $M'_n(\cdot,\cdot,f)$ with at most 3 bonds (there is always at least  one of them)  to fill the 3-vacancies in $M'_n\setminus M'_n(\cdot,\cdot,f)$. This allows to obtain a configuration $M''_n$ whose levels $M''_n(\cdot,\cdot,k)$  for $k=2,\dots,f-1$ are rectangles (with possibly the extra segment $L_e+k\ethree$), i.e., 
\begin{equation}\label{minus}M''_n(\cdot,\cdot,k)\setminus(L_e+k\ethree)=R(a_k,b_k)\end{equation}
 for some $a_k$, $b_k\in\Nz\cup\{0\}$ which are decreasing in $k$,
 see Figure \ref{fig:M2}. \EEE We
 also notice that the \UUU $(z=1)$-level \PPP remains unchanged, i.e., 
$$M''_n(\cdot,\cdot,1)=M'_n(\cdot,\cdot,1)=D^{(1)}+\ethree,$$
and we assume, without loss of generality, that $M''_n(\cdot,\cdot,f)$ is a daisy. \EEE

We stress that three-vacancies, if any, are filled one by one, first at the level $z=2$, 
 and then at the levels $z=3,\ldots f-1$ (in this order), in such a way that all the $z$-levels up to $f-1$ are still daisies. Moreover, each of these daisies is either coinciding with $M''_n(\cdot,\cdot, 1)$, or it is a rectangle: that is, if \eqref{minus} holds and $a_k<s$ or $b_k<s'$, then actually $M_n''(a_k,b_k)=R(a_k,b_k)$. 
 $M''_n$ is also an EIP$_n$ minimizer.

\begin{figure}[ht]
  \centering
  \includegraphics[width=65mm]{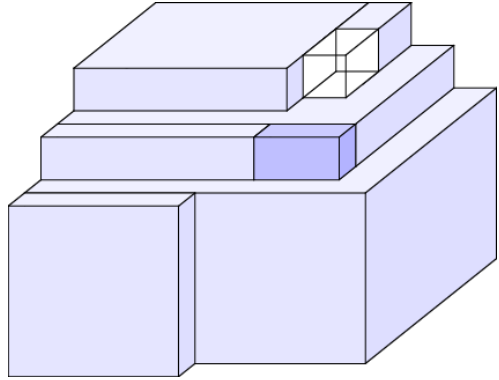}
  \caption{Configuration $M''_n$. }
  \label{fig:M2}
\end{figure}

\item[(iii)] \PPP We now construct a configuration $M'''_n$ by iteratively performing the following procedure $\mathcal{P}_k$ for $k=2,\dots,f-1$. 
The procedure $\mathcal{P}_k$ consists of performing the following two substeps: 
\begin{itemize}
\item[1)] If $a_k= s$ we directly pass to substep 2). If instead $a_k< s$, then we move an entire external edge from the $f$-level (an $f$-level edge smaller or equal to $b_k$ exists by Step (ii)), attach it at the $k$-level so that each of its atoms is bonded both to an atom that was already at the $k$-level and to one atom at the $(k-1)$-level, and, if any 3-vacancies at the $k$-level appeared, then we repeat Step (ii) in order to fill them. This can be performed so that the $k$-level of the obtained configuration is $R(a_k+1,b_k)$. 
By iterating this Substep $a=s-a_k$ times, the $k$-level of the resulting configuration  is $R(s,b_k)$.
\item[2)] If $b_k= s'$, then the procedure $\mathcal{P}_k$ is finished. If instead $b_k< s'$, then we move an entire external edge from the $f$-level, attach it at the $k$-level so that each of its atoms is bonded both to an atom that was already at the $k$-level and to one atom at the $(k-1)$-level, and possibly remove any 3-vacancies by repeating Step (ii). 
We then iterate this Substep $b=s'-b_k$ times, so that the $k$-level of the final configuration is $R(s,s')$. 
\end{itemize}
\EEE For each $k=2,\ldots, f-1$, the output of this step is a $k$-level of the form $R(s,s')$, plus possibly an    extra-line $L_e+k\ethree$ (which can be there only if $a_k=s$ and $b_k=s'$, that is, only if the procedure $\mathcal{P}_k$ is empty), see Figure
\ref{fig:M3}. \EEE

\begin{figure}[h]
  \centering
   \includegraphics[width=65mm]{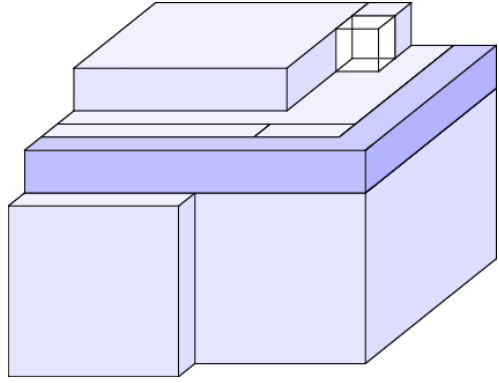}
  \caption{Configuration $M'''_n$. }
  \label{fig:M3}
\end{figure}

We notice that, if we denote by $\mathcal{P}_k(M''_n)$ the configuration obtained by iteratively performing $\mathcal{P}_i$ for $i=2,\dots,k$, we have that
$$\mathcal{P}_k(M''_n)(\cdot,\cdot,i)\setminus(L_e+i\ethree)=R(s,s')$$
for every $i=2,\dots,k$, and that 
$$M'''_n:=\mathcal{P}_{f-1}(M''_n)=(\UUU\Zz^3\PPP\cap Q(s,s',f-1))\cup F_1\cup F_2$$ 
where $F_1:=M'''_n(\cdot,\cdot,f)$ is rearranged  as a daisy \PPP and 
$$
F_2:=\begin{cases}M'''_n(\cdot,s+1,\cdot)\quad\textrm{if $s'=s$},\\
M'''_n(s+1,\cdot,\cdot)\quad\textrm{if $s'=s+1$}.
\end{cases}
$$
We notice  that $F_2\setminus F_1$ is a rectangle $R(e,s'')$ for some $e\in\{0,\ldots, s'-1\}$, $s''\in \{1,\ldots f-1\}$  and that, if $e=0$, then $F_2=\emptyset$.
 Without loss of generality we assume that
 $F_2=M'''_n(s+1,\cdot,\cdot)$, as we can move the whole $F_2$ (hence
 also the extra-line of the daisy $F_1$, if such line is contained in
 $F_2$\PPP) on that side of $Q(s,s',f)$ \PPP  since $s'\geq s$.
 Similarly, also in case $F_1 \cap F_2 = \emptyset$ moving a line of
 atoms in $F_1$ yields $F_1$ which forms a square $\Zz^3\cap (
 R(a_f,b_f)\times\{f\} )$ on the $f$
-level with an extra line of atoms in $\{a_f+1\}\times(0,e_f]\times\{f\}$. 
\end{itemize}\EEE
The configuration $\mathcal{Q}(M_n):=M_n'''$ is still an EIP$_n$ minimizer and it is the output of the cuboidification.
\end{definition}

\begin{remark}\EEE
We stress  that the recursive application of steps (ii) and (iii) in the above definition can never exhaust the upper face $M_n'(\cdot,\cdot,f)$ nor break its minimality for the two-dimensional EIP before a configuration of the form of $M_n'''$ is created, otherwise $M_n$ would not be a minimizer of the EIP$_n$.
\end{remark}

\UUU

We conclude this section with two more definitions. 
\begin{definition}\label{quasic}
We say that  a  configuration $C_n$ is {\it quasicubic}, or a {\it quasicube}, \PPP if there exist $s,s',s_3\in\Nz$ with $s'\in\{s,\PPP s+1\EEE\}$ such that (up to translation, relabeling, and reorienting the coordinate axes) 
$$C_n=\left(\UUU\Zz^3\PPP\cap Q(s,s',s_3-1)\right)\cup F^1_{d_1}\PPP\cup F^2_{d_2}\EEE $$
where $F^i_{d_i}$, $i=1,2$, are configurations with cardinality $d_i:=\#F^i_{d_i}$ such that $$F^1_{d_1}\subset \UUU\Zz^3\PPP\cap([1, s+1]\times[1,s']\times\{s_3\})$$ and 
$$F^2_{d_2}\subset \Zz^3\cap(\{s+1\}\times[1,s'-1]\times[1,s_3]).$$ 
\end{definition}

\noindent We observe that  by Definition \ref{cuboidification} the cuboidification $\mathcal{Q}(M_n)$ of an EIP$_n$ minimizer $M_n$ is a quasicube with \PPP $s_3=\#\{z\, :\, M_n(\cdot,\cdot, z)\neq\emptyset\}$ and $s-1$ the smallest side length of the rectangle $[1,s]\times[1,s']$ of the daisy with $\max_z \#M_n(\cdot,\cdot, z)$ elements.

Finally, we define the  minimal cuboid of a configuration $C_n\subset\UUU\Zz^3\EEE$.

\begin{definition}\label{mincube}
Given a configuration $C_n\subset\UUU\Zz^3\EEE$ we denote by $Q(C_n)$ the closure of the {\it minimal rectangular cuboid} containing $C_n$, i.e., the smallest rectangular cuboid $Q$ with respect to set inclusion in $\Rz^3$ that has sides parallel to $\ei$ for $i=1,2,3$, and such that  $C_n\subset Q$. 
\end{definition}


\section{Upper bound: Proof of Theorem \ref{main1}}\label{sec:ProofThm1}

We exploit the cuboidification algorithm from Section \ref{sectionsetting} to obtain the proof of our first main result.

\begin{proofad1}
Fix $n\in\Nz$ and let $M_n$ be a minimizer  of  EIP$_n$. 
 In the following, without loss of generality (up to a translation and rotation of the coordinate system), we assume that 
\be{mincubeMn}
Q(M_n)=Q(\ell_1\PPP+1\EEE,\ell_2\PPP+1\EEE,\ell_3\PPP+1\EEE)
\ee
for some $\ell_1,\ell_2,\ell_3\in\Nz\cup\{0\}$ with
\be{sides}
0\leq\ell_1\leq \ell_2 \leq \ell_3,
\ee
 where $Q(M_n)$ is  the  minimal cuboid of $M_n$ (see Definition \ref{mincube}) \PPP and $\ell_i$, $i=1,2,3$ its side lengths. \EEE
Notice also that $Q(W_n)=Q(\ell_n\PPP+1\EEE,\ell_n\PPP+1\EEE,\ell_n\PPP+1\EEE)\PPP-(1,1,1)\EEE$ for $\ell_n$ defined in \eqref{wulffside}.

We now claim that there exists a constant $K>0$ (which does not depend on $n$ and $M_n$)  such that
\be{claimupper}
\displaystyle\max_{i=1,2,3} |\ell_i-\ell_n|\leq K n^{1/12} +o(n^{1/12}).
\ee
\UUU Once this is proved, \EEE Theorem \ref{main1} follows since
$\ell_n=n^{1/3} +{\rm O}(1)$ by \eqref{wulffside} and hence  each
external face of
$Q(\ell_1\PPP+1\EEE,\ell_2\PPP+1\EEE,\ell_3\PPP+1\EEE)$ intersected
with $\UUU\Zz^3\EEE$ has cardinality $n^{2/3}+{\rm o}(n^{2/3})$.   Thus we obtain 
$$ \min_{a \in \Zz^3} \#(Q(M_n)\triangle (a + W_n) ) \leq 3K n^{3/4} + {\rm o}(n^{3/4}). $$ 
Since by \eqref{claimupper} moreover 
\begin{align*}
  \#( Q(M_n) \setminus M_n ) 
  &= (\ell_1 + 1) (\ell_2 + 1) (\ell_3 + 1) - n \\ 
  &\le (n^{1/3} + K n^{1/12} + {\rm O}(1))^3 - n 
   = 3K n^{3/4} + {\rm o}(n^{3/4}), 
\end{align*}
\eqref{t-f-law} follows.

In order to prove \eqref{claimupper}  we proceed in 5 steps. 

\subsection*{Step 1.} In this step we show that by rearranging the
elements of $M_n$ we can construct another minimizer $\overline{M}_n$
of the EIP$_n$ \PPP that is  quasicubic, i.e., there exists
$\ell,\ell',\UUU \ell_3 \EEE\in\Nz\cup\{0\}$ with $0\leq \ell\leq \ell_3$, $\ell'\in\{\ell,\ell+1\}$,  
and configurations $F^1_{d_1}$ and $F^2_{d_2}$ as in Definition \ref{quasic} such that  \EEE
\begin{equation}\label{quasicubicminimizer}
\overline{M}_n=\left(\UUU\Zz^3\EEE\cap Q(\ell\PPP+1\EEE,\ell'\PPP+1\EEE,\PPP\ell_3\EEE)\right)\cup F^1_{d_1}\cup F^2_{d_2}.
\end{equation}
This assertion follows by choosing $\overline{M}_n=\mathcal{Q}(M_n)$ and by observing that  \eqref{quasicubicminimizer} is satisfied with $\ell$ being the \PPP smallest side  length  of the rectangle of the daisy $D_{m}$ (see \eqref{daisy}) where $m$ is the maximal cardinality of the $z$-levels $M_n(\cdot,\cdot, z)$ of $M_n$  for $z=1,\dots,\ell_3+1$, i.e.,
$$ m:=\max_{z=1,\dots,\ell_3+1} \#M_n(\cdot,\cdot, z). $$

We notice that 
\be{claim}
\ell_3\geq\ell.
\ee 
Indeed, by \eqref{mincubeMn} we have $m\le (\ell_1+1)(\ell_2+1)$. On the other hand, by definition of daisy, since $\ell$ is the minimal side length of the rectangle of the daisy $D_m$, it clearly satisfies $\ell+1\le \sqrt{m}$. Therefore
 \begin{equation}\label{ellbound}
 \ell+1\le \sqrt{(\ell_1+1)(\ell_2+1)},
 \end{equation}
  and with \eqref{sides} we obtain \eqref{claim}.
 \PPP

\subsection*{Step 2.} 

\PPP
We now further rearrange $\overline{M}_n$ to ``get rid of'' the face
$F^2_{d_2}$ \UUU and obtain \PPP a new EIP$_n$ minimizer which we denote $\overline{\overline{M}}_n$. To this end, recall from Definition \ref{quasic} and Definition \ref{cuboidification} that
$$
F^1_{d_1}=\Zz^3\cap\left(([1,a_f]\times[1,b_f]\times\{\ell_3+1\}) \cup(\{a_f+1\}\times(0,e_f]\times\{\ell_3+1\}) \right)
$$
for some $a_f\in\{1,\ldots, \ell+1\}$, $b_f\in\{a_f,a_f+1\}$, $e_f\in \{0,\ldots, b_f-1\}$ and that
 $F^2_{d_2}\setminus F^1_{d_1}$ is a rectangle $R(e,s'')$ with 
  $e \in\{0,1,\ldots,\ell'\}$ \PPP and $s'' \in \{1,\ldots,\ell_3 \}$\PPP \PPP.  
If $e=0$ we set $\overline{\overline{M}}_n:=\overline{M}^1_n$, where
 \be{M1}
 \overline{M}^1_n:=\overline{M}_n=(\UUU\Zz^3\PPP\cap Q(\ell+1,\ell'+1,\ell_3))\cup F^1
 \ee
 for $F^1:=F^1_{d_1}$  $=\overline{M}_n(\cdot,\cdot,\ell_3+1)$.

 For $e>0$  and $F^1_{d_1}\cap F^2_{d_2}=\emptyset$  we define $\overline{\overline{M}}_n$ by distinguishing 3 cases: $a_f\ge e\vee s''$, $a_f<e$ and  $a_f<s''$. Later we will see how to treat the case $F^1_{d_1}\cap F^2_{d_2}\neq\emptyset$ \PPP
\begin{itemize}
\item[1.] If $a_f\ge e\vee s''$, \PPP
  then we move $F^2_{d_2}$ on top of $F^1_{d_1}$, and
  we consider the cuboidification of such configuration, \UUU which
  has the form \PPP
 \be{M2}
 \overline{M}^2_n:=(\UUU\Zz^3\PPP\cap Q(\ell+1,\ell'+1,\ell_3+1))\cup F^2
\ee
for some $F^2:= \overline{M}^2_n(\cdot,\cdot,\ell_3+2)$ (which we considered rearranged as a daisy). We set $ \overline{\overline{M}}_n:=\overline{M}^2_n$. 
\item[2.] Let $a_f<e$. We can assume without loss of generality that $s''=\ell_3$. In fact, if \PPP
  $s''<\ell_3 $, then we perform for $j=1,\dots, \ell_3 -s''$ the following transformation $\mathcal{T}^1_j$: Move an edge with length  less than $e$ from $F^1_{d_1}$ onto $F^2_{d_2}$, so that $F^2_{d_2}$ becomes a rectangle $R(e,s''+j)$ after removing \PPP (as done in Step (ii) of Definition \ref{cuboidification}) any 3-vacancy which might have been created.
The obtained configuration is 
$$M^{b}_n:=(\Zz^3\cap Q(\ell+1,\ell'+1,\ell_3 ))\cup F^{b}_1\cup F^{b}_2$$ 
where $F^{b}_1:=M^b_n(\cdot,\cdot,\ell_3+1)$ and $F^{b}_2$ is the rectangle $R(e,\ell_3 )$. We then iterate the following transformation $\mathcal{T}^2_p$: for every element $p\in B_e$ where
\be{missingbasis}
B_e:=\Zz^3\cap(\{\ell+2\}\times(e, \ell'+1]\times\{1\})
\ee
\UUU remove an \PPP edge of $F^{b}_1$, \UUU attach \PPP it to
$F^{b}_2$ by first rotating it in order to \UUU make it \PPP parallel
to  $\ethree$ and then translating it in such a way that one of its
endpoints coincides with $p$, and then \UUU remove \PPP  all
3-vacancies possibly created as in Step (ii) \UUU of Definition
\ref{cuboidification}, \PPP so that $F^{b}_2$ becomes  $R(e+1,\ell_3)$. We notice that we can perform  $\mathcal{T}^2_p$ since $a_f<e\le \ell'\le\ell+1\le \ell_3+1$ by \eqref{claim}, thus $a_f\le\ell_3$.  \PPP  The configuration obtained after performing $\mathcal{T}^2_p$ for every $p\in B_e$ is 
 \be{M3}
\overline{M}^3_n:= (\Zz^3\cap Q(\ell+2,\ell'+1,\ell_3 ))\cup F^3
\ee
where $F^3:=  \overline{M}^3_n(\cdot,\cdot,\ell_3+1)$ can be \UUU rearranged \PPP as a daisy. We set $ \overline{\overline{M}}_n:=\overline{M}^3_n$. \PPP

\item[3.] Let $a_f<s''$. \PPP 
 We first perform the transformation $\mathcal{T}^2_p$ for every $p\in B_e$ to obtain the configuration 
$$M^{c}_n:=(\Zz^3\cap Q(\ell+1,\ell'+1,\ell_3))\cup F^c_1\cup F^c_2$$ 
where $F^c_1:=M^c_n(\cdot,\cdot,\ell_3+1)$ and $F^c_2$ is a rectangle 
$R(\ell'+1,s'')$. 
\UUU Then, \PPP if $s''<\ell_3 $ we perform $\mathcal{T}^1_j$ for every  $j=1,\dots, \ell_3 -s''$ (without losing bonds since $a_f\le \ell+1\le\ell'+1$\PPP) and we obtain also a configuration of the type  $\overline{M}^3_n$.  Therefore, also in this case we set  $\overline{\overline{M}}_n:=\overline{M}^3_n$. \PPP

\end{itemize} 

 Let us now consider, for $e>0$, the case $F^1_{d_1}\cap F^2_{d_2}\neq\emptyset$, which is possible according to Definition \ref{cuboidification}. The latter definition implies in such case  $s''=\ell_3$, $a_f=\ell+1$, and
 $$
 F^1_{d_1}\cap F^2_{d_2}=\Zz^3\cap (\{\ell+2\}\times [1,\ldots, e_f] \times\{\ell_3+1\})
 $$
 with $1\le e_f\le e$. Since $e\le\ell'\le \ell+1$, we have $e_f\le \ell+1$, and thanks to \eqref{claim} we obtain $e_f\le \ell_3+1$. If $e_f\le \ell_3=s''$, we may move the $e_f$ points of $F^1_{d_1}\cap F^2_{d_2}$ to $\Zz^3\cap(\{\ell+2\}\times \{e+1\}\times[1, e_f])$ and preserve the number of bonds (since $e_f\le s''$). This produces a new configuration, with the same structure of $\overline{M}_n$, but with $F^1_{d_1}\cap F^2_{d_2}=\emptyset$, and starting from such configuration we can proceed as above with the three cases.
 Else if $e_f=\ell_3+1$, then by $e_f\le \ell+1$ and by \eqref{claim}
 we get $\ell=\ell_3$, hence $a_f=\ell+1=\ell_3+1=s''+1$. On the other
 hand, $e\le \ell'\le\ell+1=a_f\le b_f$. Therefore it is possible to
 remove the entire $F^2_{d_2}$ and place it above the rectangle
 $[1,a_f]\times[1,b_f]$ of $F^1_{d_1}$  and conclude by arguing 
 as in Case 1 above.

We observe that $\overline{\overline{M}}_n\in\{\overline{M}^i\,:\, i=1,2,3\}$ where $\overline{M}^i$ are defined for $i=1,2,3$  in  \eqref{M1},  \eqref{M2}, and  \eqref{M3}, respectively, and hence, 
\begin{equation}\label{twobar}
\overline{\overline{M}}_n:=  (\Zz^3\cap Q(a,\ell'+1,c))\cup F_d,
\end{equation}
where $a\in\{\ell+1,\ell+2\}$,  $c\in\{\ell_3,\ell_3+1\}$, and $F_d:=\overline{\overline{M}}(\cdot,\cdot,c+1)$ with $d:=\#F_d$.
\EEE
From here on, we assume that $a=\ell+1$. The rest of the proof for the case $a=\ell+2$ is essentially the same and we omit the details.

\EEE

\subsection*{Step 3} In this step we show that 
\PPP $$\ell_3-\ell=\sqrt{6}\alpha^{1/4}\ell^{1/4} +\, {\rm
  o}(\ell^{1/4}),$$
\UUU where $\alpha\in [0,1)$ \UUU is specified later on in \eqref{eq:alpha} and
depends on $\ell$ and $\ell_3$ only. 

Assume without loss of generality that $\ell_3-\ell\ge 4$. 
Then there exists $k\in \Nz$ such that \EEE
\be{divided3}
\PPP(c-1)\EEE-\ell=3k +r
\ee 
 for some $r\in\{0,1,2\}$. 
 We further rearrange $\overline{\overline{M}}_n$ in a new minimizer $\widetilde{M}_n$. In order to define $\widetilde{M}_n$ we  consider the following subsets of $\overline{\overline{M}}_n$
\begin{align}\label{extracuboids}
S_1&:=\bigcup_{z=z_1+1,\cdots,z_2} \overline{\overline{M}}_n(\cdot,\cdot,z),\nonumber\\
S_2&:=\bigcup_{z=z_2+1,\cdots,z_3} \overline{\overline{M}}_n(\cdot,\cdot,z),\nonumber\\
S_3&:=F_d\cup\left(\bigcup_{z=z_3+1,\cdots,z_4} \overline{\overline{M}}_n(\cdot,\cdot,z)\right),\nonumber\\
\quad\textrm{and}\quad R&:=\bigcup_{z=\ell+2,\cdots,z_1} \overline{\overline{M}}_n(\cdot,\cdot,z),
\end{align}
where $z_1:=\ell+1+ r$, $z_2:=\ell+1+ r+k$, $z_3:=\ell+1+ r+2k$, and $z_4:=\ell+1+ r+3k= c$. 
Notice that the configuration   
$$G:=\overline{\overline{M}}_n\setminus\left[R\cup\left(\bigcup_{k=1,2,3} S_k\right)\right]$$
is such that $G(\cdot,\cdot,z):=R(\ell\PPP+1\EEE,\ell'\PPP+1\EEE)$ for all $\PPP1\EEE\leq z\leq\ell\PPP+1\EEE$ (and $G(\cdot,\cdot,z):=\emptyset$ for $z>\ell\PPP+1\EEE$). 
\EEE

We then define  $\widetilde{M}_n$ as the configuration resulting by performing the following transformations:
\begin{itemize}
\item[1.] Move  $S_3$ altogether in such a way that each element on
  $\overline{\overline{M}}_n(\cdot,\cdot,z_3+1)$ loses its bond with
  $\overline{\overline{M}}_n(\cdot,\cdot,z_3)$ and gains a bond
  with $$G\cap\{k_1\eone + \etwo + k_3 \ethree\,\,:\quad\textrm{$k_i\in\Zz$ for $i=1,3$}\}.$$
Note that this first rearrangement \UUU does not \EEE change the total number of bonds of the minimizer $\overline{\overline{M}}_n$. 

\item[2.] We then move  $S_2$ altogether in such a way that each
  element on $\overline{\overline{M}}_n(\cdot,\cdot,z_2+1)$ loses its
  bond with $\overline{\overline{M}}_n(\cdot,\cdot,z_2)$ and gains a
  bond with $$G\cap\{\eone + k_2\etwo+ k_3 \ethree\,\,:\quad\textrm{$k_i\in\Zz$ for $i=2,3$}\}.$$
\UUU Again, this \EEE rearrangement \UUU does not \EEE  change the total number of bonds of the minimizer. 

\item[3.] Observe that, after Transformation 2 and a translation of
  $+k\eone+k\etwo$  the
  resulting configuration,   which we denote by $T_n$, is contained in the cuboid
  $
  [1,\ell+1+k]\times[0,\ell'+1+k]\times[1,\ell+1+r+k]
  $,
  ($0$ in the second factor is due to the new placement of $F_d$ after moving $S_3$). However, $T_n$ does not contain the points of 
  $V:=\Zz^3\cap (V_1\cup V_2\cup V_3)$, where 
  \[\begin{aligned}
  V_1&:=[k+1,\ell+1+k]\times[1,k]\times[\ell+2,\ell+r+k] \\
  V_2&:= [1,k]\times[k+1, \ell+k+1]\times[\ell+2,\ell+r+k]\\
  V_3&= [1,k]\times[1,k]\times[1,\ell+1].
  \end{aligned}\]  
  Notice that $V_1=V_2=\emptyset$ if $k=1$ and $r=0$.
   We now fill-in  the set  $V$
by  subsequently  moving edges with length $\ell$ from the side aligned \EEE in the direction $\eone$
of  $  T_n (\cdot,\cdot,z_2)$ (each  contains $\ell+1$
points). \UUU We call the resulting configuration $\widetilde{M}_n$ and we denote its (remaining) upper face $\widetilde{M}_n(\cdot,\cdot,z_2)$ by $\widetilde{F}_m$ with 
\be{a}\begin{aligned}
 m:=\#\widetilde{F}_m&=(\ell+1)(\ell'+1)-\#V \\&=(\ell+1)(\ell'+1)-(\ell+1)[k^2+2k(r+k-1)].
\end{aligned}\ee
 Notice that in all the steps the total number of bonds of the
 configuration  remains the same as in $M_n$, and so \UUU along these
 transformations the \EEE edges with length $\ell$ \UUU are not
 exhausted before  $V$ is filled (since this would contradict minimality of $\overline{\overline{M}}_n$ for the EIP$_n$). Hence, $\widetilde{M}_n$ is an EIP$_n$ minimizer as well. \EEE
 
%
%

 \EEE
  
\end{itemize}

By  \eqref{a} we have
\begin{align}\label{abovearea}
m=&(\ell+1)(\ell'+1)-\#V=(\ell+1)(\ell'+1)-(\ell+1)\left[k^2\,+\,2k(k+r-1)\right]\nonumber\\
=&(\ell+1)(\ell'+1-(k^2+2k(k+r-1))\nonumber\\
=&(\ell+1)(\ell-3k^2-2s_1k+s_2+1)\nonumber\\
=&\ell^2 -\ell (3k^2+2s_1k-s_2-2) -3k^2-2s_1k+s_2+1
\end{align}
where $s_1:=r-1\in\{-1,0,1\}$ and $s_2:=\ell'-\ell\in\{0,1\}$. 

Furthermore, as $\widetilde{M}_n$ is a minimizer of the EIP$_n$, it is not possible to gain any bond by rearranging the elements of $\widetilde{F}_m$ over $\widetilde{M}_n\setminus\widetilde{F}_m$. Therefore, $\widetilde{F}_m$ is a minimizer (up to translation) of EIP$^2_m$ (see \eqref{eip2}). 
 By \eqref{Theta-2-char}, 
\be{aboveperimeter}
\Theta_{2}(\widetilde{F}_m) = 2\lceil 2\sqrt{m} \rceil. 
\ee
\EEE Since  by the Transformation 3 the configuration $\widetilde{F}_m$ is rectangular with side lengths $\ell$ and $\ell'-\left(k^2\,+\,2k(k+r-1)\right)$ its  edge perimeter is simply 
\begin{align}\label{aboveperimeter2}
\Theta_{2}(\widetilde{F}_m) 
&= 2(\ell + 1) +2(\ell'-k^2-4k(k+r-1)+1) \nonumber\\
&=4\ell-6k^2-4s_1k +2s_2 + 4.
\end{align}
\EEE

Therefore, by \eqref{abovearea}, \eqref{aboveperimeter}, and \eqref{aboveperimeter2}, we have that
\begin{align*}
&4\ell-6k^2-4s_1k +2s_2\\
&\qquad\qquad=  2\lceil2\sqrt{\ell^2 -\ell (3k^2+2s_1k-s_2-2) -3k^2-2s_1k+s_2+1} \rceil - 4, \EEE 
\end{align*}
which can be written as
\begin{align*}
&2\ell-3k^2-2s_1k +s_2\\
&\qquad\qquad= 2\sqrt{\ell^2 -\ell (3k^2+2s_1k-s_2-2) -3k^2-2s_1k+s_2+1} - 2 +\alpha
\end{align*}
with 
\be{eq:alpha}
\alpha:=\lceil2\sqrt{m} - 1\rceil\,-\,( 2\sqrt{m} -
1)=\lceil2\sqrt{m}\rceil\,-\,2\sqrt{m}\in[0,1).
\ee

By taking the square we obtain
\begin{align*}
&(2\ell-3k^2-2s_1k +s_2 + 2-\alpha)^2\\
&\qquad\qquad\qquad=4\ell^2 -4\ell (3k^2+2s_1k-s_2-2) -12k^2-8s_1k+4s_2+4
\end{align*}
from which it is straightforward to compute 
\be{final-ell-k-relation}
4\alpha\ell=9k^4 +12s_1k^3 +2(2s_1^2 + 3\alpha-3s_2)k^2 +4s_1(\alpha-s_2)k + (\alpha-s_2)^2-4\alpha.
\ee
We now observe that \eqref{final-ell-k-relation}  yields \EEE 
\be{higher terms}
k=\frac{\sqrt{2}\,\alpha^{1/4}}{\sqrt{3}} \ell^{1/4} +\, {\rm o}(\ell^{1/4}).
\ee
Therefore, from \eqref{divided3} and \eqref{higher terms} we obtain
\be{32b}
\ell_3-\ell=\sqrt{6}\alpha^{1/4}\ell^{1/4} +\, {\rm o}(\ell^{1/4}).
\ee

\subsection*{Step 4.} In this step we show that  (with $\ell_n$ as defined in \eqref{wulffside}) \EEE 
\be{step2}
\ell_n-\ell= \sqrt{\frac{2}{3}}\, \alpha^{1/4}\ell^{1/4} + \,{\rm o} (\ell^{1/4}).
\ee 
From \eqref{twobar} we have that 
\begin{align}\label{totalnumber}
 n &=(\ell+1)\, (\ell'+1) \,  (\ell_3 + s_3) \EEE  \,+\, d =  (\ell+1) \,(\ell'+1) \,(\ell +\ell_3-\ell  +s_3 \EEE) \,+\,d\EEE\nn\\
 &= \ell^3 \,+\, (\ell_3-\ell)\, \ell^2 \,  +\,{\rm O} (\ell^2) \EEE
\end{align}
where $s_2:=\UUU \ell'-\ell\EEE\in\{0,1\}$  and $s_3 = c - \ell_3$, \EEE since $d={\rm O}(\ell^2)$ and since $\ell_3-\ell={\rm O}(\ell^{1/4})$ by \eqref{32b}. \EEE
Then, \eqref{totalnumber} together with \eqref{wulffside} yields
\begin{align}\label{wulffsideorder}
\ell_n&=\lfloor\sqrt[3]{n}\rfloor=\left\lfloor \ell \,\sqrt[3]{1\,+\,\frac{\ell_3-\ell}{\ell}\, +\,  {\rm O} \left(\frac{1}{\ell}\right) \EEE }\right\rfloor\nn\\
&= \left\lfloor \ell \,\sqrt[3]{1\,+\,\sqrt{6}\,\alpha^{1/4}\ell^{-3/4} + \,{\rm o} (\ell^{-3/4})}\right\rfloor\nn\\
&= \left\lfloor \ell\,\left( 1 +\frac{\sqrt{6}\alpha^{1/4}}{3} \ell^{-3/4} + \,{\rm o} (\ell^{-3/4})\right)\right\rfloor,
\end{align}
 where  in the second equality we used \UUU \eqref{32b}. 
\EEE
 The assertion \eqref{step2} follows now from \eqref{wulffsideorder}, since it implies
 \be{wulffminuscube}
\ell_n-\ell=\left\lfloor \sqrt{\frac{2}{3}}\, \alpha^{1/4}\ell^{1/4} + \,{\rm o} (\ell^{1/4})\right\rfloor.
\ee
 
\subsection*{Step 5.} In this step we conclude the proof of the estimate  \eqref{claimupper}.

\UUU Let us \EEE define $\varepsilon_i\in \Rz$ such that $\ell_i=\ell(1+\varepsilon_i)$. 
We begin by observing that, as a consequence of Step 4. we have that 
 \be{ell-as-1/3}
 \ell=n^{1/3}+{\rm o}(n^{1/3}). 
\ee
Furthermore, from \UUU \eqref{32b} \EEE  it follows that
\be{eps3}
\varepsilon_3 
= \frac{\ell_3-\ell}{\ell} 
\leq \sqrt{6}\,\alpha^{1/4}\ell^{-3/4} + \,{\rm o} (\ell^{-3/4}). \EEE
\ee

By \eqref{sides} and \eqref{ellbound} we have  $\ell_2\geq\ell$. \EEE Therefore, by  \eqref{eps3} we obtain that 
\be{eps2}
0\leq\varepsilon_2\leq\varepsilon_3\leq \sqrt{6}\,\alpha^{1/4}\ell^{-3/4} + \,{\rm o} (\ell^{-3/4})
\ee
 \UUU as \EEE $\ell_2\leq\ell_3$. 
 If also $\ell_1\geq\ell$, then the same reasoning yields that
 $0\leq\varepsilon_1\leq\varepsilon_3\leq\sqrt{6}\,\alpha^{1/4}\ell^{-3/4}
 + \,{\rm o} (\ell^{-3/4})$. Therefore, it only remains to consider
 the case in which $\ell_1<\ell$ and hence, $\varepsilon_1<0$. \UUU 
	We have in such case, again by \eqref{sides} and \eqref{ellbound}, \UUU
\begin{align*}
  &\ell \leq\ell_1 \ell_2 \ \Rightarrow \ \ell^2 \leq \ell^2
  (1+\varepsilon_1)(1+\varepsilon_2) \ \Rightarrow \ 0 \leq
  \varepsilon_1+\varepsilon_2+\varepsilon_1\varepsilon_2 \nonumber\\
&\ \Rightarrow
  \ -\varepsilon_1 \leq \varepsilon_2 + \varepsilon_1\varepsilon_2
\end{align*}
so that, in particular 
\begin{equation} 
      0\leq -\varepsilon_1 \leq \varepsilon_2. \label{eps1}
\end{equation}
\EEE
Therefore, by \eqref{eps3}, \eqref{eps2}, and \eqref{eps1} we conclude that
\be{eps}
|\varepsilon_i|\leq \sqrt{6}\,\alpha^{1/4}\ell^{-3/4} + \,{\rm o} (\ell^{-3/4}).
\ee
 for $i=1,2,3$. Finally, by Step 4. and \eqref{eps} we observe that
\begin{align*}
 |\ell_i-\ell_n|&\leq|\ell_i-\ell|+|\ell-\ell_n|\nonumber\\
 &\leq \ell|\eps_i|+  \sqrt{\frac{2}{3}}\, \alpha^{1/4}\ell^{1/4} + \,{\rm o} (\ell^{1/4})\nonumber\\
 &\leq  \left(\sqrt{6}+ \sqrt{\frac{2}{3}}\right)\alpha^{1/4}\ell^{1/4} + \,{\rm o} (\ell^{1/4})
\end{align*}
  for $i=1,2,3$, 
  which in turn by \eqref{ell-as-1/3} yields estimate \UUU
  \eqref{claimupper} \EEE  with $$K:=  \left(\sqrt{6}+ \sqrt{\frac{2}{3}}\right)\alpha^{1/4}$$
  where \UUU we recall that \EEE
  $\alpha:=\lceil2\sqrt{m}\rceil\,-\,2\sqrt{m}\in[0,1)$, \UUU see
  \eqref{eq:alpha}, \EEE for $m$ given by   \eqref{a}.
\end{proofad1}

\section{Lower bound: Proof of  Theorem \ref{main2}}\label{sec:ProofThm2} 
We begin this section with an auxiliary lemma about solutions to the EIP in the two-dimensional square lattice. Indeed, for nonnegative integers $s,p,q$ (with $s>p\vee q)$, we consider configurations in $\mathbb{Z}^2$ of the form
\[
\mathcal{R}_{s,p,q}:= R(s-p-1,s)\cup L^p_{s-q},
\]
where $L^p_{s-q}:=\mathbb{Z}^2\cap (\{s-p\}\times [1, s-q])$. Note that $\#\mathcal{R}_{s,p,q}=s^2-sp-q$.
\begin{lemma}\label{auxiliarylemma}
Let $s,p,q\in \mathbb{N}\cup\{0\}$ be such that $s\ge 1$, $p< s$, $q<s$. Then $\mathcal{R}_{s,p,q}$ is an {\rm EIP}$_n^2$ minimizer (where $n=\#\mathcal{R}_{s,p,q}=s^2-sp-q$) if and only if $$4(s-q)>(p+1)^2.$$ In particular, by choosing $p=\lfloor s^{1/2}\rfloor$ and $q=\lfloor s/4\rfloor$, $\mathcal{R}_{s,p,q}$ is a {\rm EIP}$_n^2$ minimizer for any $s\ge 2$. 
\end{lemma}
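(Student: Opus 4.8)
The plan is to reduce everything to the bond-counting characterization of two-dimensional minimizers recalled in \eqref{Theta-2-char}. First I would compute the edge perimeter, equivalently the number of unit bonds, of the explicit configuration $\mathcal{R}_{s,p,q}=R(s-p-1,s)\cup L^p_{s-q}$. Writing $w:=s-p-1$ for the width and $h:=s$ for the height of the rectangular block, and recalling that the segment $L^p_{s-q}$ occupies the single column $s-p$ immediately to the right of the block in rows $1$ through $s-q\le s$, every one of its $s-q$ atoms is bonded to its left neighbour inside $R(s-p-1,s)$. Counting the horizontal and vertical bonds inside the rectangle, the internal bonds of the segment, and these $s-q$ junction bonds, I expect to obtain (for a nonempty block, $w\ge 1$) the number of bonds $b(\mathcal{R}_{s,p,q})=2n-(2s-p)$, where $n=s^2-sp-q$; equivalently, via the elementary relation $\#\Theta_2+2b=4n$, the clean expression $\#\Theta_2(\mathcal{R}_{s,p,q})=4s-2p$ (which is just the perimeter of the $(s-p)\times s$ bounding box, the $q$ missing cells leaving it unchanged).

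Next I would invoke the characterization from \cite{MPS}: a configuration with $n$ atoms solves EIP$^2_n$ if and only if its bond count attains the maximal value $\lfloor 2n-2\sqrt{n}\rfloor$, equivalently $\#\Theta_2=2\lceil 2\sqrt{n}\rceil$. Since no configuration can exceed this maximum, $\mathcal{R}_{s,p,q}$ is a minimizer precisely when $b(\mathcal{R}_{s,p,q})\ge\lfloor 2n-2\sqrt{n}\rfloor$. Substituting the computed bond count and using $\lfloor 2n-2\sqrt{n}\rfloor=2n-\lceil 2\sqrt{n}\rceil$, this is equivalent to
\[ 2s-p\le\lceil 2\sqrt{n}\rceil . \]

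The crux is to turn this ceiling inequality into the stated polynomial condition. Since $2s-p$ is an integer, $2s-p\le\lceil 2\sqrt{n}\rceil$ holds if and only if $2\sqrt{n}>2s-p-1$; here $2s-p-1\ge s\ge 1$ (using $p\le s-1$) and $n\ge 1$, so both sides are nonnegative and squaring is an equivalence. Squaring yields $4n>(2s-p-1)^2$, and expanding both sides via $4n=4s^2-4sp-4q$ and $(2s-p-1)^2=4s^2-4sp-4s+(p+1)^2$ collapses, after cancellation, to exactly $4(s-q)>(p+1)^2$. I expect the only genuine care to be needed in this passage through the ceiling/floor (the equivalence $2s-p\le\lceil 2\sqrt{n}\rceil\iff 2\sqrt{n}>2s-p-1$, which one should also test on the borderline $4n=(2s-p-1)^2$ to confirm the strict inequality is the correct one) and in the degenerate case $w=0$ (empty block, $p=s-1$), where the bond formula above fails and must be recomputed directly; such configurations reduce to a single segment and can be handled separately.

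Finally, for the ``in particular'' statement I would simply verify the hypotheses and the inequality for $p=\lfloor s^{1/2}\rfloor$ and $q=\lfloor s/4\rfloor$. One has $p,q<s$ for $s\ge 2$, and since $q\le s/4$ gives $4(s-q)\ge 3s$ while $(p+1)^2\le(\sqrt{s}+1)^2=s+2\sqrt{s}+1$, the required strict inequality $4(s-q)>(p+1)^2$ follows from $3s>s+2\sqrt{s}+1$, i.e.\ from $2s-2\sqrt{s}-1>0$, which holds for every $s\ge 2$.
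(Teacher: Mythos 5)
Your proof is correct and follows essentially the same route as the paper's: compute the bond count of $\mathcal{R}_{s,p,q}$, invoke the characterization of EIP$^2_n$ minimizers as those configurations with $\lfloor 2n-2\sqrt{n}\rfloor$ bonds (recalled in Section \ref{sectionsetting}), and convert the resulting floor/ceiling inequality into $4(s-q)>(p+1)^2$ by squaring, with the same verification for $p=\lfloor s^{1/2}\rfloor$, $q=\lfloor s/4\rfloor$. Your caveat about the degenerate case $p=s-1$ (empty rectangular block) is a genuine subtlety that the paper silently ignores---there the bond formula, and indeed the stated equivalence itself, can fail (e.g.\ $s=2$, $p=q=1$ gives $n=1$, a single point, which is a minimizer even though $4(s-q)=(p+1)^2$)---but it is harmless for the ``in particular'' statement, since $\lfloor s^{1/2}\rfloor=s-1$ forces $s=2$ and hence $q=0$, where the formula is valid.
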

\begin{proof}
We  observe that the number of unit bonds in $\mathcal{R}_{s,p,q}$ is equal to $$(s-1)(s-p-1)+s(s-p-2)+2(s-q)-1.$$
 We use the fact that EIP$_n^2$ minimizers are characterized by a number of unit bonds equal to $\lfloor 2n-2\sqrt{n}\rfloor$, as recalled in Section \ref{sectionsetting}. As a consequence, $\mathcal{R}_{s,p,q}$ is an EIP$_n^2$ minimizer if and only if 
\[
(s-1)(s-p-1)+s(s-p-2)+2(s-q)-1=\lfloor 2(s^2-sp-q)-2\sqrt{s^2-sp-q}\rfloor,
\]
which is equivalent to
$\lfloor 2s-p-2\sqrt{s^2-sp-q} \rfloor=0$, thus to $$0\le 2s-p-2\sqrt{s^2-sp-q} <1.$$ As the first inequality is obvious, $\mathcal{R}_{s,p,q}$ is an EIP$_n^2$ minimizer if and only if 
\[
2s-p-1< 2\sqrt{s^2-sp-q},
\]   
which is equivalent to $4(s-q)>(p+1)^2$, as desired.

By choosing $p=\lfloor s^{1/2}\rfloor$ and $q=\lfloor s/4\rfloor$, the latter is reduced to
\[
4s>1+2\lfloor \sqrt{s}\rfloor +4\lfloor s/4\rfloor+\lfloor\sqrt{s}\rfloor^2,
\]
which is implied by  $2s>1+2\sqrt{s}$ that is clearly true for $s\ge 2$.
\end{proof}


A straightforward consequence of Lemma \ref{auxiliarylemma} is the sharpness of the $N^{3/4}$ law in the two-dimensional square lattice, see \cite{MPS}. Indeed,
we might consider the sequence
\begin{equation}\label{sequence1}
d_s:=s^2-s\lfloor s^{1/2}\rfloor-\lfloor s/4 \rfloor,\qquad s=2,3,\ldots. 
\end{equation}
It is easy to check that $d_s$ is a strictly increasing sequence. We have $d_s=\#\mathcal{R}_{s,p,q}$ with $p=\lfloor s^{1/2}\rfloor$ and $q=\lfloor s/4\rfloor$, and $\mathcal{R}_{s,p,q}$ is an EIP$_{d_s}^2$ minimizer by Lemma \ref{auxiliarylemma}. On the other hand, 
\[
s-\lfloor d_s^{1/2}\rfloor
= \left\lceil s-s\sqrt{1-\frac{sp+q}{s^2}} \right\rceil
\ge  s-s\left(1-\frac{sp+q}{2s^2}\right)
\ge \frac{\lfloor s^{1/2}\rfloor}{2}, \EEE
\]
so that we may compare the two-dimensional Wulff shape $W_{d_s}^2:=\big[1, \lfloor d_s^{1/2} \rfloor\big]^2\cap \mathbb{Z}^2$ with  $\mathcal{R}_{s,p,q}$ and get
\[
\min_{a\in\mathbb{Z}^2}\#(\mathcal{R}_{s,p,q}\,\triangle\, (a+ W_{d_s}^2))
\ge  \frac{1}{2} \lfloor s^{1/2}\rfloor \EEE (s-\lfloor s^{1/2}\rfloor-1),
\]
for any $s\ge 2$. As \eqref{sequence1} implies $s=d_s^{1/2}+o(d_s^{1/2})$, we find
\[
\min_{a\in\mathbb{Z}^2}\#(\mathcal{R}_{s,p,q}\,\triangle\, (a+ W_{d_s}^2))\ge\frac12 d_s^{3/4}+ o(d_s^{3/4}).
\]

We now proceed with the proof of the three-dimensional counterpart of this result.
\begin{proofad2}
Let us consider the strictly increasing sequence $$n_s:=s^3+s^2-s\lfloor s^{1/2}\rfloor-\lfloor s/4 \rfloor,\qquad s=2,3,\ldots.$$
For any integer $s\ge 2$, we may consider the configuration
\be{initialminimizer}
M_{n_s}:=Q(s,s,s)\cup F_{d_s}
\ee
where we have introduced a $2$-dimensional configuration $F_{d_s}:=M_{n_s}(\cdot,\cdot,s+1)$ with $d_s:=\#M_n(\cdot,\cdot,s+1)$. More precisely, we define the top face $F_{d_s}$ as 
\be{topface}
F_{d_s}:=\left(  \Zz^3\cap\left([ 1,r]\times[1,s]\times\{s+1\}\right) \right) \cup L^{s},\qquad r:=s-\lfloor s^{1/2}\rfloor -1,
\ee
where \begin{equation}\label{defq}
L^{s}:=\mathbb{Z}^3\cap(\{r+1\},[1,s-q],\{s+1\}), \qquad q:=\lfloor s/4\rfloor.
\end{equation}
We see that
\begin{equation}\label{new1}
d_s=s^2-s\lfloor s^{1/2}\rfloor-\lfloor s/4 \rfloor<s^2
\end{equation}
 and that 
 \begin{equation}\label{sd}
 n_s=s^3+d_s
 \end{equation}
  is indeed the number of points of $M_{n_s}$. Moreover, the top face $F_{d_s}$ is an EIP$^2_{d_s}$ minimizer for any $s\ge 2$ by an application of Lemma \ref{auxiliarylemma}. This ensures minimality of $M_{n_s}$ for the EIP$_{n_s}$, for any $s\ge 2$.  We stress that
  \begin{equation}\label{recondition}
  s-r=\lfloor s^{1/2}\rfloor+1.
  \end{equation}


%
 \EEE

By  \eqref{new1} and \eqref{sd} it follows that
\be{s}
s=n_s^{1/3} + {\rm o}(n_s^{1/3}),
\ee 
 and hence
\be{r1}
r=n_s^{1/3} + {\rm o}(n_s^{1/3})
\ee
by \eqref{recondition}. Furthermore, by  \eqref{new1} and \eqref{s} we have
\be{d2}
d_s=n_s^{2/3} + {\rm o}(n_s^{2/3}).
\ee
We also refine  \eqref{s} by recalling \eqref{wulffside} and by claiming that 
\be{claim2}
s=\ell_{n_s}.
\ee
%

To prove \eqref{claim2}  we observe that $s\le \ell_{n_s}$ easily follows from \eqref{sd} and that
\begin{align}\label{claim2proof}
\ell_{n_s}&=\lfloor\sqrt[3]{n_s-d_s+d_s}\rfloor=\left\lfloor \sqrt[3]{n_s-d_s}\,\sqrt[3]{1+\frac{d_s}{n_s-d_s}}\right\rfloor\nonumber\\
&=\left\lfloor s\,\sqrt[3]{1+\frac{d_s}{s^3}}\right\rfloor\le\left\lfloor s\,\Big(1+\frac{1}{3}\frac{d_s}{s^3}\Big)
\right\rfloor=s
\end{align}
where we used  \eqref{wulffside} in the first equality, \eqref{sd} in the third one, \eqref{new1} in the last one.   The claim is proved.

We then proceed to construct another minimizer denoted by $M''_{n_s}$ by performing the following two consecutive transformations on $M_{n_s}$: 

\begin{itemize}
\item[1.] Define the integer
\be{h1}
h_1=\left\lfloor\frac{1}{3}n_s^{1/12}\right\rfloor. 
\ee 
We translate by $\eone$ the top face $F_{d_s}$ and we move altogether
the edge \linebreak $\left(\{s\}\times [1,s] 
  \times\{s\}\right)\cap\UUU\Zz^3\EEE$ to the position
$\left(\{\PPP1\EEE\}\times
  [1,s]\EEE\times\{s+1\}\right)\cap\UUU\Zz^3\EEE$. We repeat then this
procedure recursively for each line (parallel to $\etwo$) with 
$s$ 
elements of $M_{n_s}$ which is included in the set 
$$
H_1:=[s-h_1,s]\times [1,s]\times[s-h_1,s].
$$
This transformation gives the configuration $M'_{n_s}$, see
Figure \ref{fig:H}.
\begin{figure}[h]
  \centering
    \includegraphics[width=65mm]{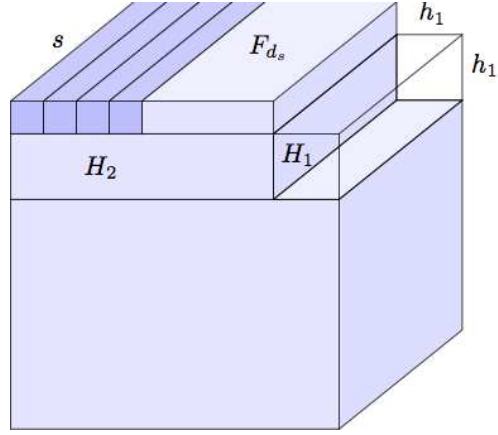}
  \caption{Configuration $M'_{n_s}$.}
  \label{fig:H}
\end{figure}
Note that $M'_{n_s}$  \EEE
 is an EIP$_{n_s}$ minimizer for large enough $s$. In fact,  the total number of moved lines  is 
$(h_1+1)^2$. 
 Hence, we can translate
$F_{d_s}$ by $(h_1+1)^2\eone$ without losing  any bond if  $ r+1+(h_1+1)^2<s-h_1$, as such condition prevents the top face to reach the points  above the `hole' $H_1$ by this translation. The latter inequality is equivalent, by \eqref{recondition}, to
\begin{equation}\label{rh1s}h_1+(h_1+1)^2<\lfloor s^{1/2}\rfloor,\end{equation} which holds true for large enough $s$ due to \eqref{claim2} and since the definition of $h_1$ entails
\begin{equation*}
(h_1+1)^2=\frac{1}{9}n_s^{1/6} + {\rm o}(n_s^{1/6}).
\end{equation*}\EEE
\smallskip

\item[2.] 
Thanks to the previous step, there exists $s_0\in\mathbb{N}$ such that, for any $s\ge s_0$, $M'_{n_s}$ is an EIP$_{n_s}$ minimizer. 
In particular, $s_0$ can be defined as the smallest integer such that
\eqref{rh1s} hold for any $s\ge s_0$. 
 For $s\ge s_0$ \EEE we move altogether the elements in the set 
$$
H_2:=[1, s-h_1-1]\times[1,s]\times[s-h_1,s+1]
$$ 
in such a way that each element of $M'_{n_s}(\cdot,\cdot,s-h_1)$ \UUU  (loses \EEE  the bond with
$M'_{n_s}(\cdot,\cdot,s-h_1-1)$ and) \UUU gets \EEE bonded with an element of  $M'_n(1,\cdot,\cdot) \setminus H_2$. \EEE We denote the resulting EIP$_{n_s}$ minimizer by $M''_{n_s}$.
\EEE
\end{itemize}

\EEE

Thanks to the two steps above,  for any $s\ge s_0$ the constructed  configuration $M''_{n_s}$ is a minimizer of the EIP$_{n_s}$ problem and  moreover we notice that
\begin{equation}\label{inclusion}
\mathbb{Z}^3\cap ([-h_1,s]\times [1,s]\times [1, s-h_1-1])\subset M''_{n_s},
\end{equation}
therefore by \eqref{wulffside}, \eqref{claim2}, \eqref{h1}, \eqref{inclusion} we conclude that
\[\begin{aligned}
\min_{a\in\mathbb{Z}^3}\#(M_{n_s}''\triangle (a+ W_{n_s}))&\ge \min_{a\in\mathbb{Z}^3}\#(M_{n_s}''\setminus (a+ W_{n_s}))\\&\ge s(s-h_1-1)(s+h_1+1-\ell_{n_s})\\&=s \left(s-\left\lfloor\frac13 \, n_s^{1/12}\right\rfloor-1\right)\, \left( \left\lfloor\frac13  n_s^{1/12}\right\rfloor+1\right).
\end{aligned}\]
Hence, 
\[
\min_{a\in\mathbb{Z}^3}\#(M''_{n_s}\triangle (a+ W_{{n_s}}))\ge\frac13
n_s^{3/4}+ {\rm o}(n_s^{3/4})
\]
follows by \eqref{s}.
\EEE
\end{proofad2}

\section*{Acknowledgements}

P. Piovano acknowledges support from the Austrian Science Fund (FWF)
project P~29681 and by the Vienna Science and Technology Fund (WWTF),
the City of Vienna, and Berndorf Privatstiftung through Project
MA16-005. \UUU U. Stefanelli is partially supported by the Vienna Science and Technology Fund (WWTF)
through Project MA14-009 and  by the Austrian Science Fund (FWF)
projects F\,65, P\,27052, and I\,2375.  \EEE

\end{document}